\documentclass[10pt,letter,final]{IEEEtran}
\usepackage{amsthm,amssymb,graphicx,multirow,amsmath,color,amsfonts}%,ulem}
\usepackage[update,prepend]{epstopdf}%
\usepackage[noadjust]{cite}
\usepackage{tikz}
\usetikzlibrary{arrows,calc}		% Optional ticks libraries
\usepackage{bbm} % for \mathbbm{1}
\usepackage{pdfpages}
\usepackage{tabulary}
\usepackage{multirow}
\usepackage{comment}
\usepackage{dsfont}
\usepackage{pgf,tikz}
%\usepackage{mathrsfs}
%\usetikzlibrary{arrows}
%\usepackage[justification=centering]{caption}
%\usepackage{subcaption}
\usepackage{bigints}
\usepackage{mathtools}
%\pagenumbering{gobble}
%\usepackage{pstricks-add}
\usepackage[english]{babel}
\usepackage[autostyle]{csquotes}
\usepackage{textcomp}
\usepackage{varwidth}

\newcommand{\cdf}{\mathtt{CDF}}

\newcommand\numberthis{\addtocounter{equation}{1}\tag{\theequation}}

\makeatletter
\renewcommand{\env@cases}[1][@{}l@{\quad}l@{}]{%
  \let\@ifnextchar\new@ifnextchar
  \left\lbrace
  \def\arraystretch{1.2}%
  \array{#1}%
}
\makeatother

% Colors

% Bold lowercase: syntax \nb# where # is {a ... z, 0,1}

\def\nb0{{\mathbf{0}}}
\def\nb1{{\mathbf{1}}}

% Bold capital letters: syntax \nb# where # is {A ... Z}

% \mathcal: syntax \ncal# where # is {A ... Z}

% \mathbb: syntax \nbb# where # is {A ... Z}

% \mathfrak:

% Roman: {\rm } syntax \nrm# where # is {a ... z}

% Special symbols

% Theorems etc.
\newtheorem{lemma}{Lemma}

\newtheorem{theorem}{Theorem}

%%%%%%%% Backwards compatibility

%

   % rate coverage

							% Threshold = \beta_i
\def\sinr{\mathtt{SINR}}			% Signal to interference plus noise ratio

% Fading

%% Symbols changed
% \def\i{\mathbf{1}}					% changed to \nb1
% \def\d{\mathrm{d}}					% changed to \nrmd
% \def\L{\mathcal{L}}					% changed to \ncalL
% \begin{definition}					% changed to \begin{ndef}

% \l also gives problems. Use \ell after defining it if needed.

%% D2D def

% \def\yj{y_{jx}}

%% 

%% fading

%% Added by Priyabrata
%\newcommand{\PropMatrix}{\mathbf{G}}
%\newcommand{\SmallScaleGainMat}{\mathbf{H}}
%\newcommand{\LargeScaleGainMat}{\mathbf{D}}
%\newcommand{\myDet}{\text{det}}
%\newcommand{\ComplexNorm}[2]{\mathcal{CN}\left(#1,#2\right)}
%\newcommand{\DetectedSyms}{\myVec{t}}
%\newcommand{\myExpon}{\text{exp}}
%\newcommand{\myVar}{\text{Var}}
%\newcommand{\myCosInt}[1]{\text{Ci}\left(#1\right)}
%\newcommand{\mySinInt}[1]{\text{Si}\left(#1\right)}
%\newcommand{\myProbSym}[1]{\mathbb{P}\left[#1\right]}

% \newcommand{\myGamma}[2]{\mathcal{G}(#1,#2)}

\allowdisplaybreaks % Allows breaking of eqnarray over multiple pages (avoids unnecessary blanks in the document before eqnarray)

\begin{document}
\graphicspath{{./Figures/}}
\title{\huge Analysis of Transmission Rate of Wireless Networks under the Broadcast Approach with Continuum of Transmission Layers}
\author{
Praful D. Mankar and Harpreet S. Dhillon 
\thanks{The authors are with Wireless@VT, Bradley Department of Electrical and Computer Engineering, Virginia Tech, Blacksburg, VA. Email: \{prafuldm, hdhillon\}@vt.edu. \hfill %Manuscript last updated: \today.
}
}

\IEEEaftertitletext{\vspace{-2\baselineskip}}

\maketitle
\begin{abstract} In this letter, we characterize the performance of broadcast approach with continuum of transmission layers in random wireless networks where the channel state information (CSI) is assumed to be known only at the receiver. By modeling the transmitter-receiver pairs using bipolar Poisson point process, we derive analytical expressions for the mean and variance of achievable transmission rate and the network transmission capacity under broadcast approach. Our analysis shows that the broadcast approach provides better mean transmission rate with lower variance compared to outage strategy.
\end{abstract}
\vspace{-.15cm}
\begin{keywords}
Stochastic geometry, broadcast approach, outage strategy, transmission capacity, rate outage probability.
\end{keywords}\vspace{-.35cm}
\section{Introduction}
\label{sec:Introduction}
In the absence of CSI at the transmitter, {\em outage}~\cite{BiglieriProakisShamai} and {\em broadcast}~\cite{Cover_BroadcastChannel,Shamai1997} are two well-known transmission strategies. In the {\em outage strategy}, the transmitter sends data frames at a fixed rate and the receiver is able to decode them successfully when the channel state is above a certain threshold. Otherwise, the transmission fails, which is defined as {\em outage}. Please refer to \cite{BiglieriProakisShamai} for the achievable capacity under {\em outage strategy} for a point-to-point link. On the other hand, \textit{broadcast strategy} adapts its transmission rate with the channel states without the need for CSI at the transmitter \cite{Cover_BroadcastChannel,Shamai1997}. In particular, the transmitter sends the information encoded in multiple layers and the receiver tries to decode the maximum number of layers depending upon the current channel state. The achievable rate using broadcast approach with continuum of transmission layers for a point-to-point link is studied in \cite{Shamai1997,Shamai2003}. 

The outage strategy has been widely used to analyze key performance metrics, such as transmission rate, packet delay, and network capacity, in large-scale wireless networks using tools from stochastic geometry, e.g., see \cite{Baccelli_Aloha2006,WeberAndrews2010,AndBacJ2011,Haenggi2013}. Some newer perspectives on enhancing spectral efficiency in {\em single layer} transmission strategies as well as some newer definitions of key metrics are explored recently in \cite{alammouri2019unified} and \cite{DiRenzo2018}, respectively.  However, it is somewhat surprising to note that the similar analysis for the broadcast strategy in the large-scale network setting has not been performed yet, which is the main focus of this letter. 

{\em Contributions:} In this letter, we extend the analysis of \cite{Shamai1997,Shamai2003} to a large-network setting and present the transmission rate analysis for wireless networks under broadcast strategy. In particular, by modeling the locations of transmitter-receiver pairs using bipolar Poisson point process (PPP), we analyze the mean and variance of transmission rate of a typical link and the transmission capacity (TC) of  network. We define TC as the product of the mean transmission rate and the maximum spatial density of transmitter under rate outage constraint. Through numerical comparisons, we observe that the broadcast approach performs better in terms of  mean and variance of transmission rate as compared to the outage strategy.
\section{System Model}
We consider a large ad-hoc network modeled as a bipolar PPP, wherein the transmitters are distributed according to a homogeneous PPP $\Phi$ with density $\lambda$ and each transmitter has a designated receiver at a distance $R_0$ in a random direction that is independent and identically distributed (i.i.d.) for all links.
We assume saturated queues, i.e., each transmitter always has a packet in its queue for transmission. 
The analysis will be performed for the {\em typical link} of this bipolar PPP whose transmitter is placed at $\mathbf{x}_o\equiv[0,-R_0]$ and receiver at the origin $o$. Since, by Slivnyak's theorem, the reduced Palm distribution of a PPP is the same as its original distribution, the locations of the other transmitters in the network will simply form the PPP $\Phi$. 
Thus, considering the single-input-single-output (SISO) channel, we have  $z_i = $
\begin{equation}
\sqrt{P \ell(\mathbf{x}_0)}h_{\mathbf{x}_0,i}y_{\mathbf{x}_0,i} + \sum\nolimits_{\mathbf{x}_k\in\Phi}{\sqrt{P \ell(\mathbf{x}_k)}h_{\mathbf{x}_k,i}y_{\mathbf{x}_k,i}} + n_i
\label{eq:Channel}
\end{equation}
where $\{z_i\}$ are the received symbols at the receiver of the typical link, $\{y_{\mathbf{x}_k,i}\}$ are the transmitted symbols from transmitter $\mathbf{x}_k$, $\{h_{\mathbf{x}_k,i}\}$ are fading coefficients (assumed to be i.i.d. across all links), and $\{n_i\}\sim\mathcal{CN}(0,\sigma_N^2)$ are additive noise samples. Assuming Rayleigh fading, $|h_{\mathbf{x}_k,i}|^2 \sim \exp(1)$. 
Further, assuming the power-law path-loss, we have $\ell(\mathbf{x}_k)=\|\mathbf{x}_k\|^{-\alpha}$ where $\alpha>2$ is the path-loss exponent. 
For this setup, we will characterize the performance of broadcast strategy and compare it with that of the outage strategy assuming that the channel state information is available only to the receiver. We define the {\em channel state} $S$ as
\begin{equation}
S=\frac{|h_{\mathbf{x}_0,i}|^2R_0^{-\alpha}}{\sum_{\mathbf{x}_k\in\Phi}|h_{\mathbf{x}_k,i}|^2\|\mathbf{x}_k\|^{-\alpha}P+\sigma_N^2},
\label{eq:ChannelState}
\end{equation}
which will be useful in constructing transmission layers for the broadcast approach. We first characterize the cumulative distribution function ($\cdf$) of $S$ conditioned on the locations of interfering transmitters, $\Phi$, in the following Lemma. 
\begin{lemma}
\label{lemma:ConditionalCDF_S}
$\cdf$ of channel state $S$ conditioned on $\Phi$ is 
\begin{equation}
\mathbb{P}[S>s|\Phi] = \prod\nolimits_{\mathbf{x}_k\in\Phi}\frac{\exp(-sR_0^\alpha {\sigma}_N^2)}{1+sPR_0^\alpha\|\mathbf{x}_k\|^{-\alpha}}.
\label{eq:CDF_S_condPhi}
\end{equation} 
\begin{proof}
We obtain \eqref{eq:CDF_S_condPhi} using the fact that $|h_{\mathbf{x}_k,i}|^2 \sim \exp(1)$ and $\mathbb{P}[|h_{\mathbf{x}_o,i}|^2>sR_0^\alpha(\sigma_N^2+\sum_{\mathbf{x}_k\in\Phi}|h_{\mathbf{x}_{k},i}|^2\|\mathbf{x}_{k}\|^{-\alpha})\mid\Phi]$ $=\prod_{\mathbf{x}_k\in\Phi}\mathbb{E}\left[\exp(-sR_0^\alpha(\sigma_N^2+|h_{\mathbf{x}_{k},i}|^2P\|\mathbf{x}_{k}\|^{-\alpha}))\mid\Phi\right]$.
\end{proof}
\end{lemma}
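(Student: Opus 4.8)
The plan is to convert the event $\{S>s\}$ into a tail event for the fading power $|h_{\mathbf{x}_0,i}|^2$ of the typical link, and then exploit the independence of the Rayleigh fading variables across links to factor the resulting expectation over the points of $\Phi$.

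First I would observe that, since $R_0^{-\alpha}>0$ and the denominator of \eqref{eq:ChannelState} is almost surely positive, the event $\{S>s\}$ is equivalent to
\[
|h_{\mathbf{x}_0,i}|^2 \;>\; sR_0^\alpha\Big(\sigma_N^2 + P\sum\nolimits_{\mathbf{x}_k\in\Phi}|h_{\mathbf{x}_k,i}|^2\|\mathbf{x}_k\|^{-\alpha}\Big).
\]
Conditioning on $\Phi$ together with the interfering fading powers $\{|h_{\mathbf{x}_k,i}|^2\}_{\mathbf{x}_k\in\Phi}$, the right-hand side is deterministic, and because $|h_{\mathbf{x}_0,i}|^2\sim\exp(1)$ is independent of these quantities, the conditional probability of this event is $\exp\big(-sR_0^\alpha\sigma_N^2 - sR_0^\alpha P\sum_{\mathbf{x}_k\in\Phi}|h_{\mathbf{x}_k,i}|^2\|\mathbf{x}_k\|^{-\alpha}\big)$.

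Next I would take the expectation over the interfering fading powers, still conditioned on $\Phi$. Pulling out the deterministic factor $e^{-sR_0^\alpha\sigma_N^2}$ and using that the $\{|h_{\mathbf{x}_k,i}|^2\}$ are i.i.d.\ and mutually independent across the points of $\Phi$, the expectation of the product becomes a product of single-point expectations,
\[
\mathbb{P}[S>s\mid\Phi] \;=\; e^{-sR_0^\alpha\sigma_N^2}\prod\nolimits_{\mathbf{x}_k\in\Phi}\mathbb{E}\!\left[\exp\big(-sR_0^\alpha P\|\mathbf{x}_k\|^{-\alpha}|h_{\mathbf{x}_k,i}|^2\big)\right],
\]
and each factor is the Laplace transform of a unit-mean exponential random variable evaluated at $sR_0^\alpha P\|\mathbf{x}_k\|^{-\alpha}$, i.e.\ $(1+sPR_0^\alpha\|\mathbf{x}_k\|^{-\alpha})^{-1}$; distributing the noise factor across the product recovers \eqref{eq:CDF_S_condPhi}.

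This is essentially a routine stochastic-geometry computation, so I do not expect a serious obstacle; the only step meriting care is the interchange of the expectation with the (possibly infinite) product over $\mathbf{x}_k\in\Phi$. That interchange is justified by applying monotone convergence to the partial products, noting that every factor $\exp(-sR_0^\alpha P\|\mathbf{x}_k\|^{-\alpha}|h_{\mathbf{x}_k,i}|^2)$ lies in $(0,1]$, so the limit of the partial products exists and the exchange is valid whether or not the infinite product is strictly positive. One should also keep the symbol index $i$ fixed throughout, since the claim concerns the per-symbol channel state.
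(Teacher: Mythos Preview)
Your argument is correct and follows the same approach as the paper: rewrite $\{S>s\}$ as a tail event for the exponential variable $|h_{\mathbf{x}_0,i}|^2$, average out the desired-link fading to obtain a product of exponentials, and then use independence of the interferer fading powers together with the Laplace transform of $\exp(1)$ to get the product form. The paper's proof is just a one-line sketch of exactly these steps; your version is a more careful expansion (including the monotone-convergence justification for the infinite product), but there is no substantive difference.
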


{\em Outage strategy.} Since channel state $S$ is assumed to be unknown at the transmitter, a reasonable baseline for comparisons with the broadcast strategy is the well-known outage strategy in which the transmitter sends data blocks at fixed rate $\ln(1+\beta)$ and receiver is able to decode the blocks when the received signal-to-interference-plus-noise ratio, $\sinr_{os}=SP$, is above threshold $\beta$. Another reason for considering this as the baseline is the fact that its performance is well-understood in the aforementioned setting. For instance, the mean transmission rate for outage strategy is 
\begin{equation}
R_{os}(\lambda,\beta)=\mathbb{P}[\sinr_{os}\geq\beta]\ln(1+\beta),~~\text{where}
\label{eq:Rate_OS}
\end{equation}  
$\mathbb{P}[\sinr_{os}\geq\beta]=\mathbb{E}_{\Phi}[\mathbb{P}[SP>\beta|\Phi]]=\exp(-\pi\lambda R_0^2 \beta^\delta \mathcal{Z}_\delta - R_0^\alpha \tilde{\sigma}_N^2 \beta)$, $\delta=\frac{2}{\alpha}$, $\tilde{\sigma}_N^2={\sigma}_N^2/P$ and $\mathcal{Z}_\delta=\int_0^{\infty}\frac{1}{1+u^{\frac{1}{\delta}}}{\rm d}u$. Interested readers are advised to refer to \cite{Haenggi2013} for more details. 
\vspace{-.35cm}
\section{Broadcast Approach} \label{sec:broadcast}
This is the main technical section of the paper, where we characterize the performance of the broadcast approach of \cite{Shamai1997,Shamai2003} in the ad-hoc network setup of the previous Section. We begin by introducing the broadcast approach briefly but encourage the readers to refer to \cite{Shamai1997,Shamai2003} for an in depth discussion in the context of a point-to-point link. Considering the most general setting for $S$, we treat it as a continuous random variable, which allows the transmitter to transmit, in parallel, infinitely many data layers parameterized by $S$, where $S$ can be interpreted as the continuous index of these layers. The power assigned to layer indexed by $S=s$ is $\rho(s){\rm d}s\geq 0$. 
%The layer indexed by $s$ contains information of fractional rate ${\rm d}s$. Note that the fractional rates ${\rm d}s$ are not equal for all layers. 
Based on its channel state (given by $s$), the receiver will be able to decode a certain number of these layers. In particular, it will first decode the first layer while treating the signals transmitted over the other layers as interference. Next, in order to decode the second layer, it will first cancel out the signal transmitted over the first layer and then decode second layer treating the rest (third onwards) as interference. Likewise, the receiver will continue to successively decode the all layers indexed as $u\leq s$ by cancelling the signals in corresponding lower indexed layers. Thus, the decoding SINR for the layer indexed by $u$ becomes
\begin{align*}
\sinr_{bs}(u)=\frac{u\rho(u){\rm d}u}{1+uI(u)}
\end{align*}
where $I(u)=\int_u^\infty \rho(t){\rm d}t$.
The $I(s)$ is monotonically decreasing function of $s$ and the total transmit power allocated to all layers is $I(0)=\int_0^\infty\rho(s)ds=P$. 
Therefore, the {\em differential transmission rate} obtained by decoding layer indexed by $s$ is  
\begin{equation}
dR(s)=\ln\left(1+\frac{s\rho(s){\rm d}s}{1+sI(s)}\right)=\frac{s\rho(s){\rm d}s}{1+sI(s)}.
\end{equation}
The second equality is due to $\ln(1+f(x){\rm d}x)=f(x){\rm d}x$ for bounded $f(x)$.
The achievable rate for the channel state $s$ is an integration of the differential rates over all layers $u\leq s$:
\begin{equation}
R(s)=\int_0^s\frac{u\rho(u){\rm d}u}{1+uI(u)}.
\label{eq:RateCond_s}
\end{equation}
Note that the channel state $S$, defined in (2), is a function of the channel gain, co-channel interference and noise power.
Thus, the mean transmission rate for the typical receiver is
\begin{align}
R_{bs}(\lambda)&=\mathbb{E}_{\Phi}\left[\int_0^\infty R'(s|\Phi)(1-F_S(s|\Phi)){\rm d}s\right]\nonumber\\
&\stackrel{(a)}{=}\mathbb{E}_{\Phi}\left[\int_0^\infty \frac{s\rho(s)}{1+sI(s)}\prod_{\mathbf{x}_k\in\Phi}\frac{\exp(-sR_0^\alpha{\sigma}_N^2)}{1+sPR_0^\alpha\|\mathbf{x}_k\|^{-\alpha}}{\rm d}s\right]\nonumber\\
\stackrel{(b)}{=}\int_0^\infty& \frac{s\rho(s)}{1+sI(s)}\exp\left(-\pi\lambda R_0^2 (sP)^\delta \mathcal{Z}_\delta -sR_0^\alpha{\sigma}_N^2\right){\rm d}s
\label{eq:MeanRate_BA}
\end{align}
where step (a) follows using Lemma \ref{lemma:ConditionalCDF_S} and step (b) follows by exchanging the expectation with the integral and further using the Probability Generating Functional of a PPP~\cite{Haenggi2013}.

Next, we optimize $R_{bs} $ over the power distribution $\rho(s)$, equivalently over $I(s)$, under the total transmission power, i.e.,
\begin{equation}
R_{bs}(\lambda)= \max_{I(s)}\int_0^\infty \frac{s\rho(s)}{1+sI(s)}\exp\left(-G_\lambda s^\delta - s G_N\right){\rm d}s,
\label{eq:MeanRate_BA_max}
\end{equation}
where $G_N=R_0^\alpha{\sigma}_N^2$ and $G_\lambda=\pi\lambda R_0^2 P^\delta \mathcal{Z}_\delta$. Similar to \cite{Shamai2003}, using Euler-Lagrange equation, we obtain the function $I(s)$, for  which the functional in \eqref{eq:MeanRate_BA_max} is stationary, as follows
\begin{align}
I(s)=\begin{cases} 
\frac{1}{G_N s^2  + \delta G_\lambda s^{\delta+1}} - \frac{1}{s},~~~~&\text{if}~ s_0\leq s\leq s_1\\
0, &\text{otherwise},
\end{cases}
\label{eq:IS}
\end{align}
where $s_0$ is determined by $I(s_0)=P$  and  $s_1$ is determined by $I(s_1)=0$.
Further, using $\rho(s)=-\frac{{\rm d}}{{\rm d}s}I(s)$, we obtain the mean transmission rate as 
\begin{align}
R_{bs}(\lambda)=\int_{s_0}^{s_1} &\left(\frac{2G_Ns+\delta(\delta+1)G_\lambda s^\delta}{G_Ns+\delta G_\lambda s^\delta}-\left(G_Ns+\delta G_\lambda s^\delta\right)\right)\nonumber\\ 
&~~~~~~~~\times\frac{1}{s}\exp(-G_\lambda s^\delta -G_Ns){\rm d}s.\label{eq:MeanRate1}
\end{align}
Now, we study Eq. \eqref{eq:MeanRate1} in the following limiting cases.
\newline{\em Case 1) $\lambda\to 0$:} In the limiting case of $\lambda\to 0$, i.e. noise limited scenario, we obtain 
\begin{align*}
I(s)=\begin{cases}\frac{1}{G_N s^2}-\frac{1}{s}, & \text{if}~ s_0\leq s\leq s_1\\
0, &\text{otherwise}
\end{cases}
\end{align*}
where $s_0=\frac{\sqrt{G_N^2+4G_NP}-G_N}{2G_NP}$ and $s_1=\frac{1}{G_N}$. Further, by substituting $\rho(s)=-\frac{{\rm d}}{{\rm d}s}I(s)$ and $I(s)$ in \eqref{eq:MeanRate_BA} at $\lambda\to 0$, we get
\begin{equation}
R_{bs}=2(\text{Ei}(L_0)-\text{Ei}(L_1)) - (\exp(L_0)-\exp(L_1))
\label{eq:MeanRate_NoiseLimited}
\end{equation}
where $\text{Ei}(x)=\int_{x}^\infty \frac{\exp(-t)}{t}{\rm d}t$ is the exponential integral, $L_0=G_Ns_0$ and $L_1=G_Ns_1$. It can be observed that the mean transmission rate in \eqref{eq:MeanRate_NoiseLimited} is the same as the one obtained in \cite[Eq. (18)]{Shamai2003} with $R_0^2=1$ and $\sigma_N^2=1$. Refer to \cite[Eq. (20)]{Shamai2003} for the asymptotic behavior of Eq. \eqref{eq:MeanRate_NoiseLimited} w.r.t $P$.
\newline{\em Case 2) $\sigma_N^2\to 0$:} In the limiting case of $\sigma_N^2\to 0$, i.e. interference limited scenario, we obtain 
\begin{align*}
I(s)=\begin{cases} 
\frac{1}{\delta G_\lambda s^{\delta+1}} - \frac{1}{s},~~~~&\text{if}~ s_0\leq s\leq s_1\\
0, &\text{otherwise},
\end{cases}
\label{eq:IS}
\end{align*}
where $s_1=(\delta G_\lambda)^{-\frac{1}{\delta}}$ and $s_0$ is the solution to $s_0^\delta(s_0P+1)-\frac{1}{\delta G_\lambda}=0.$ Further, by substituting $\rho(s)=-\frac{{\rm d}}{{\rm d}s}I(s)$ and $I(s)$ in \eqref{eq:RateCond_s} and \eqref{eq:MeanRate_BA} at $\sigma_N^2\to 0$, we get transmission rate $R(s)$ as 
\begin{equation}
\hspace{-.07cm}R(s)\hspace{-.1cm}=\hspace{-.1cm}\begin{cases}
\hspace{-.05cm}(\delta+1)\ln(\frac{s}{s_0})\hspace{-.05cm}-G_\lambda[s^\delta-s_0^\delta],~\text{if}~ s\in[s_0,s_1]\\
\hspace{-.05cm}(\delta+1)\ln(\frac{s_1}{s_0})\hspace{-.05cm}-G_\lambda[s_1^\delta-s_0^\delta],~\text{if}~s_1<s\\
\hspace{-.05cm}0,~~~~~~~~~~~~~~~~~~~~~~~~~~~~~~~~~\text{otherwise},
\end{cases}
\label{eq:RateCond_BS}
\end{equation}
and the average transmission rate $R_{bs}(\lambda)$ as 
\begin{equation}
R_{bs}(\lambda)=\frac{\left(\text{Ei}(T_0)-\text{Ei}(T_1)\right)}{\left(1+{\delta}^{-1}\right)} - [\exp(-T_0)-\exp(-T_1)],
\label{eq:AverageRate_BS}
\end{equation} 
where ${T_0=G_\lambda s_0^\delta}$ and ${T_1=G_\lambda s_1^\delta}$. Since the noise-limited case is the same as the one studied in \cite{Shamai2003}, we will henceforth refer to the interference-limited case of $\sigma_N^2\to 0$. \\
{\em Remarks:} We have $T_1=G_\lambda s_1^\delta=\frac{1}{\delta}$ and $T_0$ is a solution of $PG_\lambda^{-\frac{1}{\delta}}T_0^{1+\frac{1}{\delta}}+T_0-\frac{1}{\delta}$. Therefore, using $G_\lambda=\pi\lambda R_0^2P^\delta\mathcal{Z}_\delta$, it is evident that the average transmission rate $R_{bs}(\lambda)$ is independent of the transmission power $P\in\mathbb{R}_+$.  Besides, the fact that $T_0\to T_1=\frac{1}{\delta}$ (or, $s_0\to 0$ and $s_1\to 0$) as $\lambda\to\infty$ implies that the $R_{bs}(\lambda)\to 0$ as $\lambda\to\infty$. This implies that there is a single layer in the limiting case of $\lambda$ such that $\rho(s)=P$ if $s=0$ and $\rho(s)=0$ otherwise. This means that the broadcast approach reduces to the outage strategy with $\sinr$ threshold $\beta=0$ in a highly dense network. 

Now coming to the outage strategy, from Eq. \eqref{eq:Rate_OS}, it can be observed that the average achievable rate under the standard outage strategy depends on threshold $\beta$. For $\sigma_N^2\to 0$, the maximum average rate can be expressed as
\begin{equation}
R_{os}(\lambda,\beta_{\text{opt}})=\exp\left(-\tilde{G}_\lambda \beta_{\text{opt}}^\delta\right)\ln\left(1+\beta_{\text{opt}}\right),
\label{eq:AchivableRate_OS}
\end{equation}
where $\tilde{G}_\lambda=\pi\lambda R_0^2\mathcal{Z}_\delta$ and $\beta_{\text{opt}}$ solves the equation
$$\beta_{\text{opt}}^{\delta-1}\left(1+\beta_{\text{opt}}\right)\ln\left(1+\beta_{\text{opt}}\right)=\left(\delta \tilde{G}_\lambda\right)^{-1}.$$
\vspace{-1.1cm}
\subsection{Variance of the Transmission Rate}
\vspace{-.1cm}
We now compute the variance of transmission rate under the two strategies. In Section~\ref{sec:ResultsandDiscussion}, we will demonstrate that the broadcast approach reduces variance in the transmission rate. Starting with the broadcast approach, we can obtain the second moment of $R(s)$ as follows
\begin{align*}
&R_{bs,2}(\lambda)=\int_{s_0}^{s_1}R(s)^2f_S(s){\rm d}s + R(s_1)^2\int_{s_1}^{\infty}f_S(s){\rm d}s\\
&=R(s)^2F_S(s)\big |_{s_0}^{s_1} - 2\int_{s_0}^{s_1}R(s)R'(s)F_S(s){\rm d}s \\
&~+ R(s_1)^2(1-F_S(s_1))\\
&\stackrel{(a)}{=}R(s_1)^2 - 2\int_{s_0}^{s_1}R(s)R'(s)F_S(s){\rm d}s\\
&\stackrel{(b)}{=}R(s_1)^2 +2\left(\Gamma(2,T_0)-\Gamma(2,T_1)\right) - G_\lambda^2\left(s_1^{2\delta}-s_0^{2\delta}\right)\\
&~~+\left(s_1^\delta-s_0^\delta\right)\left(2G_\lambda^2s_0^\delta+2(\delta+1)G_\lambda\ln\left({s_1}/{s_0}\right)\right) \\
&~~-\left(\exp\left(-T_0\right) - \exp\left(T_1\right)\right)\left(2\delta^{-1}(\delta+1)+2G_\lambda s_0^\delta\right)\\
&~~+2\delta^{-1}(\delta+1)\left(G_\lambda s_0^\delta-1\right)\left(\text{Ei}(T_0)-\text{Ei}(T_1)\right)\\
&~~+(\delta+1)\ln(s_1/s_0)\left(2\exp(-T_1)-(\delta+1)\ln(s_1/s_0)\right)\\
&~~+2(\delta+1)^2\int_{s_0}^{s_1}\frac{1}{s}\ln(s/s_0)\exp\left(-G_\lambda s^\delta\right){\rm d}s\numberthis
\label{eq:2ndMomentRate_BS}
\end{align*}
where step (a) follows from $R(s_0)=0$ and step (b) follows by substituting $R(s)$ from \eqref{eq:RateCond_BS}, and using $R'(s)=\frac{1}{s}\left((\delta+1) -\delta G_\lambda s^\delta\right)$, $F_S(s)=1-E_\Phi[\mathbb{P}[S>s|\Phi]]=1-\exp\left(-G_\lambda s^\delta\right)$ for $\sigma_N^2\to 0$, where $\mathbb{P}[S>s|\Phi]$ is given in Lemma \ref{lemma:ConditionalCDF_S}, followed by some mathematical manipulations.
Finally, using \eqref{eq:AverageRate_BS} and \eqref{eq:2ndMomentRate_BS}, the variance of the transmission rate under broadcast approach can be determined as follows
\begin{equation}
\sigma_{R_{bs}}^2(\lambda) = R_{bs,2}(\lambda) - R_{bs}(\lambda)^2.
\label{eq:VarianceRate_BS}
\end{equation}

The second moment of transmission rate under outage strategy can be obtained as 
$R_{os,2}(\lambda,\beta_{\text{opt}})=\exp(-\tilde{G}_\lambda\beta_\text{opt}^\delta)\ln^2(1+\beta_\text{opt})$.
Thus, using \eqref{eq:AchivableRate_OS}, the variance of transmission rate under outage strategy becomes
\begin{equation}
\sigma_{R_{os}}^2(\lambda,\beta_{\text{opt}})=R_{os,2}(\lambda,\beta_{\text{opt}}) -  R_{os}(\lambda,\beta_\text{opt})^2.
\end{equation}
\begin{figure}
\centering\vspace{-.4cm}
	\includegraphics[width=.78\columnwidth]{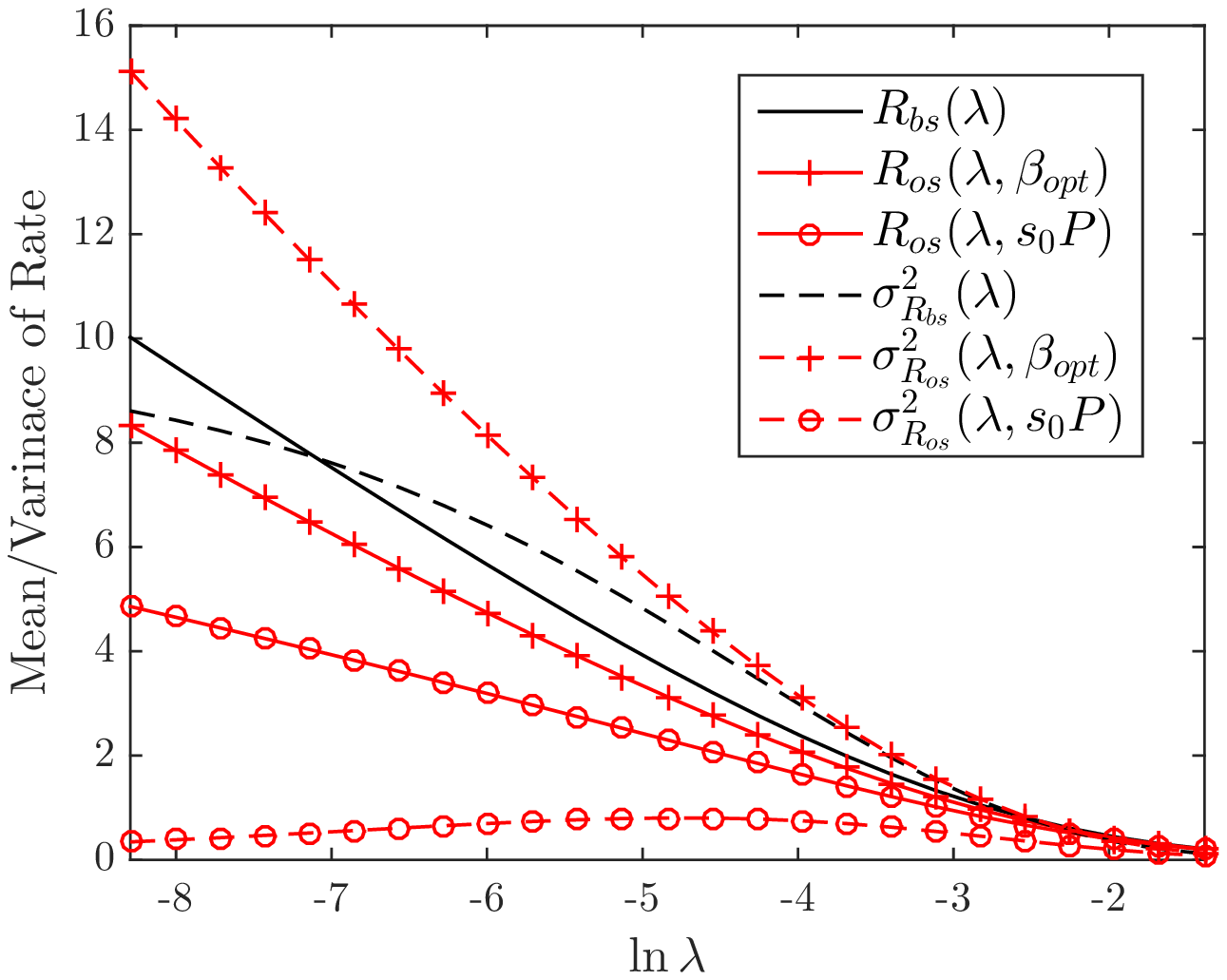}\\
    \includegraphics[width=.78\columnwidth]{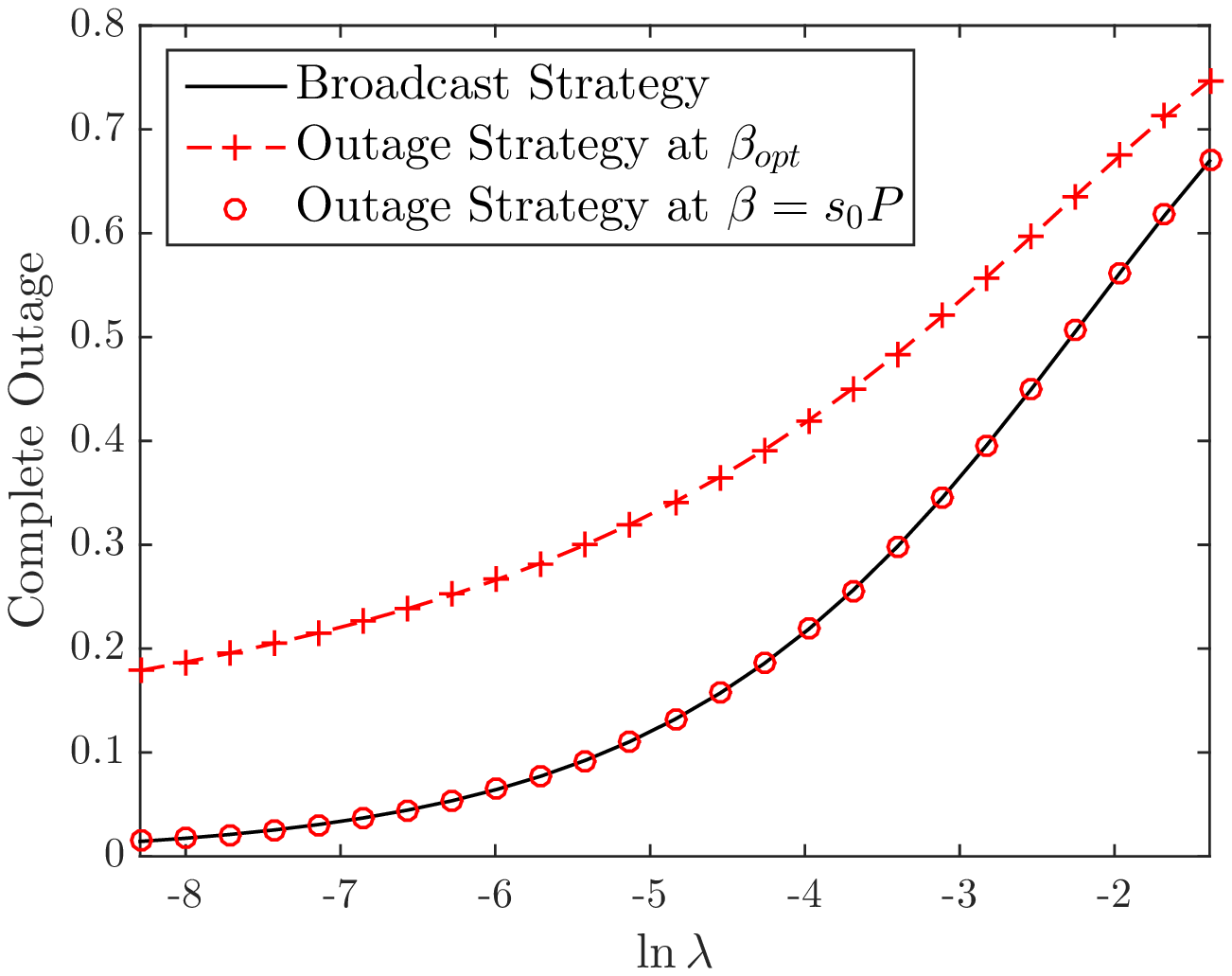} \vspace{-.4cm} 
  \caption{Top: mean and variance of achievable rate for $R_0=1$ and $\alpha=4$. Bottom: complete outage probability for $R_0=1$ and $\alpha=4$. }\vspace{-.5cm}
  \label{fig:MenaVar_Rate_and_OutageProbbaility}
\end{figure}
\vspace{-1cm}
\subsection{Transmission Capacity}
First define the rate outage probability as
$$q(\lambda)=\mathbb{P}[R(s)<\xi],$$
which is a continuous and monotonically increasing function of $\lambda$. On similar lines as~\cite{WeberAndrews2010}, we define TC as
\begin{equation}
c(\epsilon)=q^{-1}(\epsilon)R_{bs}(q^{-1}(\epsilon)),
\end{equation}
where $\epsilon\in (0,1]$. Note that $q^{-1}(\epsilon)$ is the maximum spatial density of transmitters associated with the rate outage $\epsilon$.  
In words, TC represent to the number of nats transmitted per unit time per unit area such that the transmission rate of a typical user is above threshold $\xi$ with probability $1-\epsilon$.
In the following theorem, we characterize the achievable transmission capacity under broadcast approach. 
\begin{theorem}
\label{thm:TransmissionCapacity_BS}
For $\xi<R(s_1)$ and $\mathcal{H}_\lambda<\exp(-1)$, the transmission capacity is $c(\epsilon)=\lambda_\epsilon R_{bs}(\lambda_\epsilon)$ where $R_{bs}(\lambda_\epsilon)$ is given by \eqref{eq:AverageRate_BS} and
\begin{equation}
\lambda_\epsilon=\left\{\lambda\in\mathbb{R}_+\mid \exp\left(\frac{\delta+1}{\delta}W\left(-\mathcal{H}_\lambda\right)\right)
<1-\epsilon\right\}
\label{eq:TransmissionCapacity_BS}
\end{equation}
 \begin{equation}
\text{where}~\mathcal{H}_\lambda=\frac{\delta }{\delta+1}G_\lambda s_0^\delta\exp\left(\frac{\delta}{\delta+1}\left(\xi-G_\lambda s_0^\delta\right)\right),~~~~~~~~~~~~~\nonumber
\end{equation}
$G_\lambda = \pi\lambda R_0^2 P^\delta \mathcal{Z}_\delta$ and $W(\cdot)$ is Lambert-W function.
\end{theorem}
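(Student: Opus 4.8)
The plan is to obtain the rate‑outage probability $q(\lambda)=\mathbb{P}[R(s)<\xi]$ in closed form and then invert it. First I would analyze the map $s\mapsto R(s)$ given in \eqref{eq:RateCond_BS} for the interference‑limited regime. On the support $[s_0,s_1]$ we have $R'(s)=\tfrac1s\big((\delta+1)-\delta G_\lambda s^\delta\big)$, and since $s\le s_1=(\delta G_\lambda)^{-1/\delta}$ forces $\delta G_\lambda s^\delta\le 1<\delta+1$, $R$ is strictly increasing there; moreover $R(s)=0$ for $s<s_0$, $R(s_0)=0$, and $R(s)\equiv R(s_1)$ for $s>s_1$. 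Hence, under the hypothesis $\xi<R(s_1)$ (and $\xi>0$), there is a unique $s^{*}\in(s_0,s_1)$ with $R(s^{*})=\xi$, and $\{R(s)<\xi\}=\{s<s^{*}\}$. Using $F_S(s)=1-\exp(-G_\lambda s^\delta)$ for $\sigma_N^2\to0$ (obtained from Lemma~\ref{lemma:ConditionalCDF_S} via the PGFL of the PPP, exactly as in the derivation of \eqref{eq:MeanRate_BA}), this gives $q(\lambda)=F_S(s^{*})=1-\exp(-G_\lambda (s^{*})^\delta)$.

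Next I would solve $R(s^{*})=\xi$ for $G_\lambda(s^{*})^\delta$ using the Lambert‑$W$ function. Writing $u=(s^{*})^\delta$ and $u_0=s_0^\delta$, the equation $(\delta+1)\ln(s^{*}/s_0)-G_\lambda(u-u_0)=\xi$ becomes, after multiplying by $\tfrac{\delta}{\delta+1}$ and exponentiating, $u\exp(-\tfrac{\delta}{\delta+1}G_\lambda u)=u_0\exp\big(\tfrac{\delta}{\delta+1}(\xi-G_\lambda u_0)\big)$; multiplying by $-\tfrac{\delta}{\delta+1}G_\lambda$ puts this in the canonical form $z e^{z}=-\mathcal{H}_\lambda$ with $z=-\tfrac{\delta}{\delta+1}G_\lambda (s^{*})^\delta$ and $\mathcal{H}_\lambda$ exactly as in the statement. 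The condition $\mathcal{H}_\lambda<\exp(-1)$ ensures $-\mathcal{H}_\lambda\in(-e^{-1},0)$, the domain on which $W$ is real; and since $(s^{*})^\delta\le s_1^\delta$ gives $\tfrac{\delta}{\delta+1}G_\lambda(s^{*})^\delta\le\tfrac1{\delta+1}<1$, the relevant root satisfies $z>-1$ and hence comes from the principal branch. Therefore $G_\lambda(s^{*})^\delta=-\tfrac{\delta+1}{\delta}W(-\mathcal{H}_\lambda)$, so $q(\lambda)=1-\exp\big(\tfrac{\delta+1}{\delta}W(-\mathcal{H}_\lambda)\big)$.

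Finally, since $q$ is continuous and strictly increasing in $\lambda$, $q^{-1}$ is well defined and $\lambda_\epsilon=q^{-1}(\epsilon)$ is the unique value with $q(\lambda_\epsilon)=\epsilon$, i.e. $\exp\big(\tfrac{\delta+1}{\delta}W(-\mathcal{H}_{\lambda_\epsilon})\big)=1-\epsilon$, which is the boundary characterization \eqref{eq:TransmissionCapacity_BS}; substituting $\lambda_\epsilon=q^{-1}(\epsilon)$ into $c(\epsilon)=q^{-1}(\epsilon)R_{bs}(q^{-1}(\epsilon))$ with $R_{bs}$ from \eqref{eq:AverageRate_BS} gives $c(\epsilon)=\lambda_\epsilon R_{bs}(\lambda_\epsilon)$. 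I expect the main obstacle to be the bookkeeping in the Lambert‑$W$ reduction — in particular, justifying the choice of the principal branch and confirming that $q(\lambda)$ is genuinely monotone over the admissible $\lambda$ (equivalently that $\mathcal{H}_\lambda$ is increasing in $\lambda$, which follows from $G_\lambda s_0^\delta=\tfrac1{\delta(s_0 P+1)}$ being increasing in $\lambda$ together with $b\mapsto b\,e^{\mu(\xi-b)}$ being increasing for $b<1/\mu$, $\mu=\tfrac{\delta}{\delta+1}$, and $b=G_\lambda s_0^\delta\le\tfrac1\delta<\tfrac1\mu$).
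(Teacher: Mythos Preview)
Your proposal is correct and follows essentially the same route as the paper: both reduce $\{R(s)\le\xi\}$ to $\{s\le s^{*}\}$, perform the identical algebraic manipulation (exponentiate, raise to power $\delta$, multiply by $-\tfrac{\delta}{\delta+1}G_\lambda$) to reach the canonical Lambert-$W$ form $ze^{z}=-\mathcal{H}_\lambda$, and then evaluate $F_S(s^{*})=1-\exp(-G_\lambda(s^{*})^\delta)$. Your treatment is in fact a bit more careful than the paper's, since you explicitly justify the choice of the principal branch via $z>-1$ and argue the monotonicity of $q(\lambda)$, whereas the paper asserts these points without proof.
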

\begin{proof}
From \eqref{eq:RateCond_BS}, we have $\mathbb{P}[R(s)\leq \xi]=1$ for $\xi>R(s_1)$. Now, for $\xi<R(s_1)$, we have
\begin{align*}
\mathbb{P}[R(s)\leq \xi]&=\mathbb{P}[R(s)\leq \xi|s\leq s_0]\mathbb{P}[s\leq s_0]\\
&+\mathbb{P}[R(s)\leq \xi|s> s_0]\mathbb{P}[s>s_0]\\
\stackrel{(a)}{=}\mathbb{P}[s\leq s_0]&+\mathbb{P}[R(s)\leq \xi|s> s_0]\mathbb{P}[s>s_0]\numberthis
\label{eq:Rate_Rth_Regions}
\end{align*}
where (a) follows due to $R(s)=0$ for $s<s_0$. Now, we determine $\mathbb{P}[R(s)\leq \xi|s> s_0]$ 
\begin{align*}
&=\mathbb{P}\left[(\delta+1)\ln\left({s}/{s_0}\right)-G_\lambda\left(s^\delta -s_0^\delta\right) \leq \xi\mid s> s_0\right]\\
&=\mathbb{P}\left[s\exp\left(-\frac{G_\lambda}{\delta+1}s^\delta\right)\leq s_0\exp\left(\frac{\xi-G_\lambda s_0^\delta}{\delta+1}\right)\mid s> s_0\right]\\
&\stackrel{(a)}{=}\mathbb{P}\left[-\frac{\delta G_\lambda}{\delta+1} s^\delta\exp\left(-\frac{\delta }{\delta+1}G_\lambda s^\delta\right)\geq  \right. \\
&~~~~~~~~~~\left. -\frac{\delta G_\lambda}{\delta+1}s_0^\delta\exp\left(\frac{\delta}{\delta+1}\left(\xi-G_\lambda s_0^\delta\right)\right)\mid s> s_0\right]\\
&\stackrel{(b)}{=}\mathbb{P}\left[s \leq \left(\frac{\delta+1}{\delta G_\lambda}W\left(-\mathcal{H}_\lambda\right)\right)^\frac{1}{\delta}\mid s> s_0\right]\numberthis
\label{eq:Rate_CDF_Rth_CondSo}
\end{align*}
where step (a) follows by raising both sides to power $\delta$ and multiplying by constant $-\frac{\delta G_\lambda}{\delta+1}$, and step (b) follows using the fact that $x=f^{-1}(x\exp(x))=W(x\exp(x))$, where $W(\cdot)$ is the Lambert-W function. The probability in \eqref{eq:Rate_CDF_Rth_CondSo} holds true for  $-\mathcal{H}_\lambda>-\exp(-1)$ as the root of $W(x)$ is complex valued for $x<-\exp(-1)$.
Finally, by substituting \eqref{eq:Rate_CDF_Rth_CondSo} along with $\mathbb{P}[S<s]=1-\exp(-G_\lambda s^\delta)$ into \eqref{eq:Rate_Rth_Regions} and solving further, gives $\lambda_\epsilon=q^{-1}(\epsilon)$ as in \eqref{eq:TransmissionCapacity_BS}, which completes the proof.
\end{proof}
\begin{figure}
\centering
	\vspace{-.2cm}
    \includegraphics[width=.78\columnwidth]{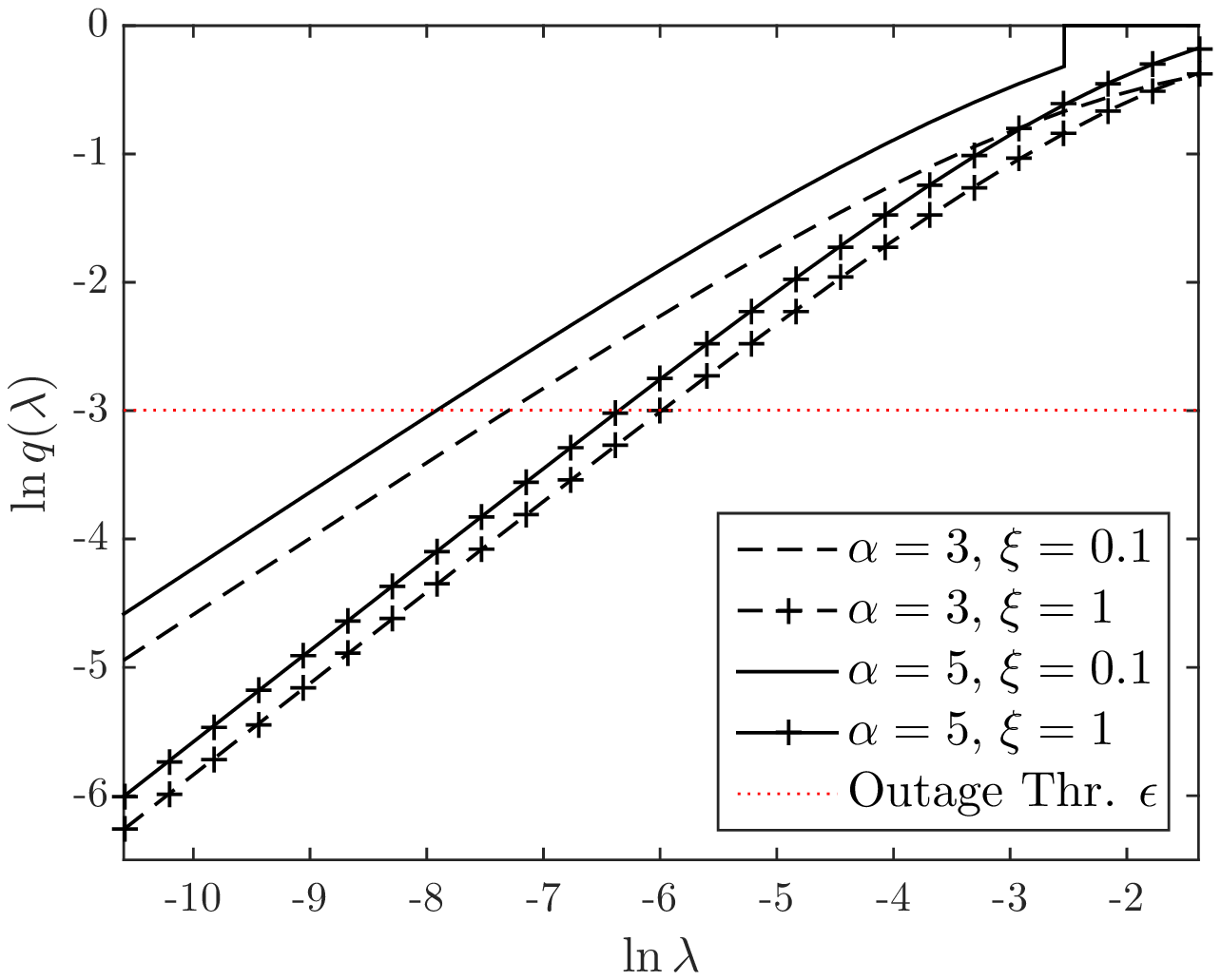}\\  
    \includegraphics[width=.78\columnwidth]{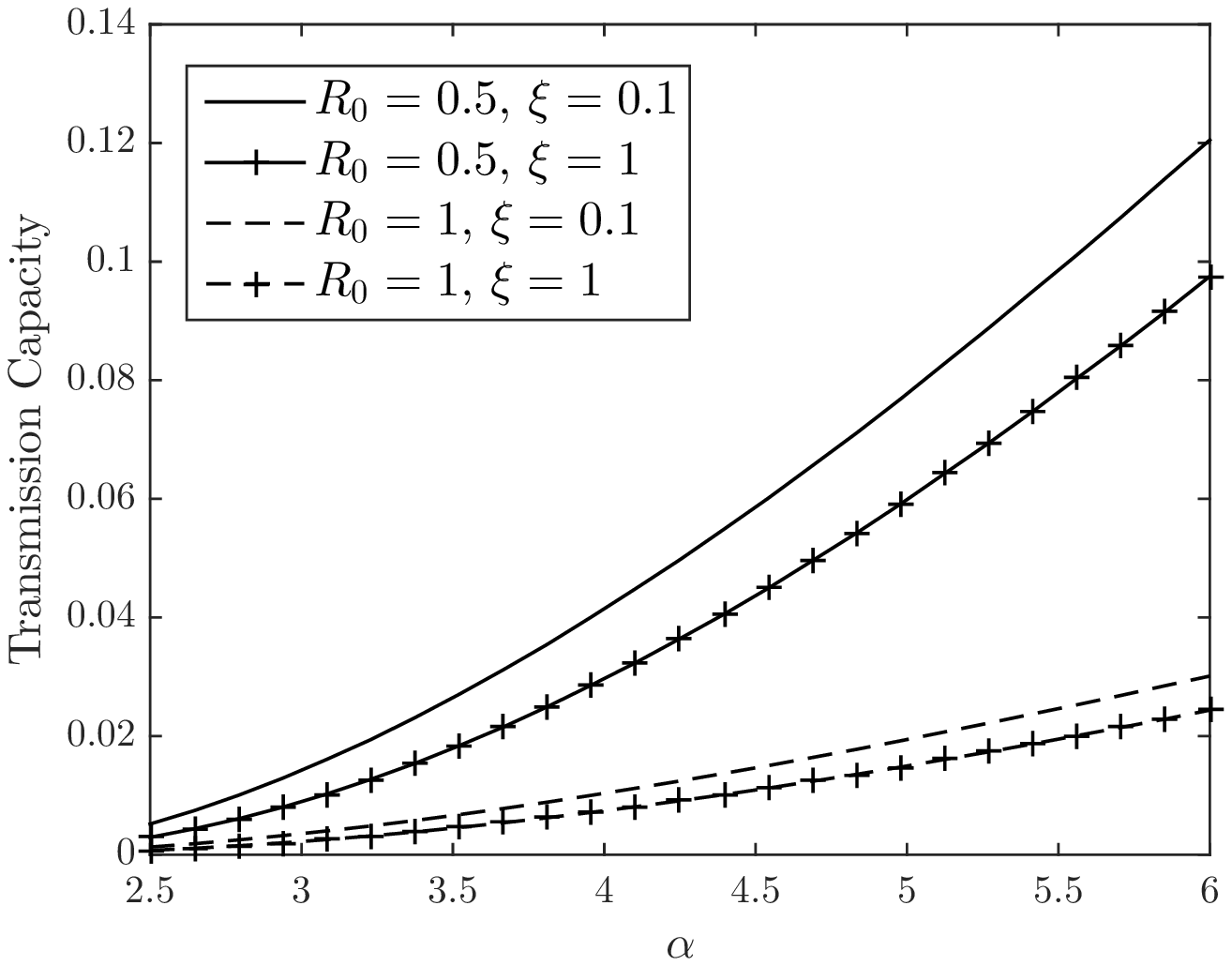} \vspace{-.3cm} 
  \caption{Top: rate outage vs. transmitter density $\lambda$ for $R_0=1$. Bottom: transmission capacity vs. pathloss exponent $\alpha$.}\vspace{-.4cm}
  \label{fig:TransmissionCapacity}
\end{figure}
\vspace{-.35cm}
\section{Results and Discussion}
\label{sec:ResultsandDiscussion}
In this section, we illustrate the network performance in terms of achievable transmission rate and transmission capacity for interference limited scenario i.e. $\sigma_N^2\to 0$.
For comparison purpose, we use the outage strategy with $\sinr$ thresholds $\beta=\beta_{\text{opt}}$, yielding maximum mean transmission rate, and $\beta=s_0P$, yielding the same \textit{complete outage} to the broadcast approach where we call the probability of a receiver not being able to decode a single layer as the \textit{complete outage}. 
 
From Fig. \ref{fig:MenaVar_Rate_and_OutageProbbaility}, it is clear that the broadcast approach provides better mean transmission rate compared to the outage strategy. Note that the gain in mean transmission rate further increases when the density of transmitters $\lambda$ is decreased. In addition, it can be seen that the broadcast approach yields less variation in transmission rate and better {\em complete outage} compared to the outage strategy with $\sinr$ threshold $\beta_{\text{opt}}$.  

Fig. \ref{fig:TransmissionCapacity} depicts the rate outage  probability and transmission capacity for outage threshold $\epsilon=0.05$ and $\xi\in\{0.1,1\}$. We observe that the $\mathcal{H}_\lambda<\exp(-1)$ for the dynamic rage of rate threshold $\xi$, which allows to evaluate $\mathbb{P}[R(s)<\xi]$ using the Lambert-W function. The jump in one of the curves in top figure is due to $\mathbb{P}[R(s)<\xi]=1$ for $\xi>R(s_1)$. As expected, the transmission capacity can be observed to be increasing with $\alpha$ and decreasing with $\xi$. 
\vspace{-.15cm}
\section{Conclusion}
In this letter, we have analyzed the achievable transmission rate in wireless networks under broadcast approach, with continuum of transmission layers, wherein the channel state information is assumed to be known at the receiver only. We have derived the mean and variance of the transmission rate under broadcast approach with optimal power distribution across the continuum of layers. Numerical results indicate that the broadcast approach provides improved achievable transmission rate at lower variation compared to optimally configured standard outage strategy. In addition, we also derived the rate distribution which allowed us to characterize the transmission capacity of the network under broadcast strategy. A useful extension of this work is the inclusion of traffic fluctuations to study {\em traffic-adaptive} broadcast approach.  
\vspace{-.1cm}

\end{document}